\newtheorem{theorem}{Theorem}
\theoremstyle{definition}
\theoremstyle{remark}
\author{Igor Korepanov}
\title{Two deformations of a fermionic solution to pentagon equation}
\begin{document}

\maketitle

\begin{abstract}
Two novel fermionic --- expressed in terms of Grassmann--Berezin calculus of anticommuting variables --- solutions of pentagon equation are proposed, both being deformations of the known solution related to the affine group.
\end{abstract}

\section{Introduction}\label{sec:P}

By pentagon equation we understand an algebraic relation which can be said to correspond naturally to a Pachner move $2\to 3$ --- an elementary rebuilding of a 3-manifold triangulation, which replaces two tetrahedra $1234$ and~$1235$ with three tetrahedra $1245$, $1345$ and~$2345$ occupying the same place in the manifold. If some quantities satisfy this relation, we say that a solution to pentagon equation has been found.

Many interesting solutions of pentagon equation are related to quantum invariants of 3-manifolds, and these were certainly a source of inspiration in our search. We are interested, however, in contrast to the usual approach, in solutions expressed in terms of Grassmann--Berezin calculus of anticommuting variables.

The particular kind of pentagon equation dealt with in this paper is~\eqref{5}, while its solutions are \eqref{G}, \eqref{g} and~\eqref{nR}. The already known solution is~\eqref{G}; there exists a conceptual proof of its validity and even a whole theory relating it to Reidemeister torsion of some unusual chain complexes associated with an affine group, see~\cite{bk}. It can also be expressed naturally as a fermionic Gaussian integral, we do it here in formula~\eqref{ef}.

Our new solutions in this paper are \eqref{g} and~\eqref{nR}, and they look even more unusual. Although we manage to represent them, too, in a form of Gaussian integral, their possible relations to Reidemeister torsions look obscure, and our proofs of their validity consist in direct calculations. Happily, these proofs are much simplified due to the fact that both \eqref{g} and~\eqref{nR} appear as ``perturbations'' of~\eqref{G}, namely by adding a new term of higher or lower degree, respectively, in anticommuting variables.

Below, in Section~\ref{sec:B} we recall the necessary things about the calculus of anticommuting variables; in Section~\ref{sec:G} we recall our ``old'' pentagon equation solution. Then we present our new solutions in Section~\ref{sec:g}, write out a Gaussian integral form for them in Section~\ref{sec:f} and finish with a brief discussion in Section~\ref{sec:d}.

\section{Grassmann algebras and Berezin integral}\label{sec:B}

A \emph{Grassmann algebra} over a field~$\mathbb F$ --- for which we can take in this paper any field of characteristic${}\ne 2$ --- is an associative algebra with unity, having generators~$a_i$ and relations
$$
a_i a_j = -a_j a_i .
$$
As this implies for $i=j$ that $a_i^2 =0$, any element of a Grassmann algebra is a polynomial of degree $\le 1$ in each~$a_i$. For a given Grassmann monomial, by its degree we understand its total degree in all Grassmann variables; if an element of Grassmann algebra includes only monomials of odd degrees, it is called odd; if it includes only monomials of even degrees, it is called even.

The \emph{exponent} is defined by the standard Taylor series. For example,
\[
\exp (a_1a_2) = 1+a_1a_2 .
\]
If $\varphi_1$ and~$\varphi_2$ are two even elements, then
\begin{equation}\label{ee}
\exp(\varphi_1) \exp(\varphi_2)=\exp(\varphi_1+\varphi_2).
\end{equation}

The \emph{Berezin integral}~\cite{B} is an $\mathbb F$-linear operator in a Grassmann algebra defined by equalities
\begin{equation}\label{iB}
\int \mathrm da_i =0, \quad \int a_i\, \mathrm da_i =1, \quad \int gh\, \mathrm da_i = g \int h\, \mathrm da_i,
\end{equation}
if $g$ does not depend on~$a_i$ (that is, generator~$a_i$ does not enter the expression for~$g$); multiple integral is understood as iterated one, according to the following model:
\begin{equation}\label{mB}
\iint ab\, \mathrm db\, \mathrm da = \int a \left( \int b\, \mathrm db \right) \mathrm da = 1 .
\end{equation}

\section{Solution of pentagon equation related to affine group}\label{sec:G}

\subsection{Tetrahedron weight}

We ascribe a \emph{coordinate} $\zeta_i \in \mathbb F$ to every vertex $i=1,\dots,5$ of tetrahedra taking part in the move $2\to 3$, see the first paragraph of Section~\ref{sec:P}. It will be also convenient to use the notation
\[
\zeta_{ij} \stackrel{\rm def}{=} \zeta_i - \zeta_j .
\]

Following paper~\cite{bk}\footnote{In this paper, we deal only with the ``scalar'' case of~\cite{bk}, not going into the more complicated matrix case.}, we attach anticommuting Grassmann generators to \emph{unoriented} 2-faces, such as $a_{123}=a_{132}=\dots=a_{321}$, and introduce the following function of coordinates and these generators --- the fermionic ``Boltzmann weight'' of a tetrahedron. To avoid bulky notations, we write it out for tetrahedron~$1234$; for another tetrahedron~$i_1i_2i_3i_4$ just change $k\mapsto i_k$, $k=1,\dots,4$:
\begin{multline}\label{G}
\mathbf f_{1234} = \frac{1}{\zeta_{34}} (\zeta_{23}a_{123}-\zeta_{24}a_{124}+\zeta_{34}a_{134})(\zeta_{13}a_{123}-\zeta_{14}a_{124}+\zeta_{34}a_{234}) \\
= \zeta_{12}a_{123}a_{124} - \zeta_{13}a_{123}a_{134} + \zeta_{14}a_{124}a_{134} \\
+ \zeta_{23}a_{123}a_{234} - \zeta_{24}a_{124}a_{234} + \zeta_{34}a_{134}a_{234} .
\end{multline}
Note that $\mathbf f_{1234}$ belongs to an oriented tetrahedron~$1234$, that is, it changes its sign under a change of orientation.

The weight $\mathbf f_{1234}$ is related to the group $\mathrm{Aff}(\mathbb F)$, i.e., the group of transformations of the form $x\mapsto xa+b$, but we do not explain it here, referring the reader to our paper~\cite{bk}.

\subsection{The pentagon equation}

As is known from~\cite{bk}, the following pentagon equation\footnote{In formula~\cite[(3)]{bk}, the convention~\eqref{mB} about the order of multiple integration was adopted. So, \cite[(3)]{bk} coincides essentially with our formula~\eqref{5}, we only interchanged $\mathrm da_{145} \leftrightarrow \mathrm da_{345}$ and wrote the minus sign arising from this. Then, however, there goes a slight confusion in that paper, because, starting from~\cite[Section~4]{bk}, a different convention was adopted.} holds for the $\mathbf f$'s defined by~\eqref{G}:
\begin{equation}\label{5}
\int \mathbf f_{1234} \mathbf f_{1235} \, \mathrm da_{123} = - \frac{1}{\zeta_{45}} \iiint \mathbf f_{1245} \mathbf f_{2345} \mathbf f_{1345} \, \mathrm da_{345}\, \mathrm da_{245}\, \mathrm da_{145} .
\end{equation}
See also Subsection~\ref{subs:e} below for some explanation of this.

\subsection{Relation to exponentials of bilinear forms}

Associate with tetrahedron~$1234$ the following matrix (which is to be compared with the expression between two equality signs in~\eqref{G}):
\begin{equation}\label{A}
A_{1234}=
\begin{pmatrix}
\zeta_{23} & -\zeta_{24} & \zeta_{34} & 0 \\
\zeta_{13}/\zeta_{34} & -\zeta_{14}/\zeta_{34} & 0 & 1
\end{pmatrix}
\end{equation}
and also two more Grassmann generators $b_{1234}^{(1)}$ and~$b_{1234}^{(2)}$. Consider the following bilinear form of Grassmann variables:
\begin{equation}\label{Phi}
\Phi_{1234} = \begin{pmatrix} b_{1234}^{(1)} & b_{1234}^{(2)} \end{pmatrix} A_{1234}
\begin{pmatrix} a_{123} \\ a_{124} \\ a_{134} \\ a_{234} \end{pmatrix}.
\end{equation}
Then it can be seen directly using~\eqref{iB} that the following Gaussian integral representation holds:
\begin{equation}\label{ef}
\mathbf f_{1234} = \iint \exp \Phi \;\mathrm db_{1234}^{(1)} \,\mathrm db_{1234}^{(2)}\, .
\end{equation}

Combining this with the property~\eqref{ee}, we see that both sides in~\eqref{5} can be expressed as multiple (five-fold in the l.h.s.\ and nine-fold in the r.h.s.) integrals of bilinear forms.

\subsection{Some explicit expressions}\label{subs:e}

As this paper is about direct calculations, it makes sense to write out here the matrices of bilinear forms corresponding to the l.h.s.\ and r.h.s.\ of~\eqref{5}. The building blocks for them are copies of matrix~\eqref{A}.

The matrix for l.h.s.\ is 
\begin{equation}\label{lG}
\begin{pmatrix}
\zeta_{23} & -\zeta_{24} & 0 & \zeta_{34} & 0 & 0 & 0 \\
\zeta_{13}/\zeta_{34} & -\zeta_{14}/\zeta_{34} & 0 & 0 & 0 & 1 & 0 \\
\zeta_{23} & 0 & -\zeta_{25} & 0 & \zeta_{35} & 0 & 0 \\
\zeta_{13}/\zeta_{35} & 0 & -\zeta_{15}/\zeta_{35} & 0 & 0 & 0 & 1
\end{pmatrix} ;
\end{equation}
the rows correspond to $b_{1234}^{(1)}$, $b_{1234}^{(2)}$, $b_{1235}^{(1)}$, and~$b_{1235}^{(2)}$; the columns correspond to $a_{123}$, $a_{124}$, $a_{125}$, $a_{134}$, $a_{135}$, $a_{234}$, and~$a_{235}$.

The matrix for r.h.s.\ is 
\begin{equation}\label{rG}
\begin{pmatrix}
\zeta_{24} & -\zeta_{25} & 0 & 0 & \zeta_{45} & 0 & 0 & 0 & 0 \\
\zeta_{14}/\zeta_{45} & -\zeta_{15}/\zeta_{45} & 0 & 0 & 0 & 0 & 0 & 1 & 0 \\
0 & 0 & \zeta_{34} & -\zeta_{35} & \zeta_{45} & 0 & 0 & 0 & 0 \\
0 & 0 & \zeta_{14}/\zeta_{45} & -\zeta_{15}/\zeta_{45} & 0 & 0 & 0 & 0 & 1 \\
0 & 0 & 0 & 0 & 0 & \zeta_{34} & -\zeta_{35} & \zeta_{45} & 0 \\
0 & 0 & 0 & 0 & 0 & \zeta_{24}/\zeta_{45} & -\zeta_{25}/\zeta_{45} & 0 & 1 
\end{pmatrix} ;
\end{equation}
the rows correspond to $b_{1245}^{(1)}$, $b_{1245}^{(2)}$, $b_{1345}^{(1)}$, $b_{1345}^{(2)}$, $b_{2345}^{(1)}$, and~$b_{2345}^{(2)}$; the columns correspond to  $a_{124}$, $a_{125}$, $a_{134}$, $a_{135}$, $a_{145}$, $a_{234}$, $a_{235}$, $a_{245}$, and~$a_{345}$.

The coefficient at every Grassmann monomial in a bilinear form is the minor of its matrix standing in the intersection of the rows and columns corresponding to the variables in that monomial. This reduces the proof of~\eqref{5} to comparing minors of matrices \eqref{lG} and~\eqref{rG}; such minor must include all rows of the corresponding matrix (because all the~$b$'s must be integrated out) and those columns corresponding to inner faces (for the same reason; the inner faces are, of course, $123$ in the l.h.s., and $145$, $245$, and~$345$ in the r.h.s.); other columns must correspond to the same~$a$'s in \eqref{lG} and~\eqref{rG}. Also, the signs must be taken into account appearing when we bring a variable to the right in order to integrate it out, as well as the factor~$(-1/\zeta_{45})$ in~\eqref{5}.

Fortunately, there exists a theory saving us from actually doing all these calculations, because of a proportionality of the mentioned minors; this is explained in the proof of Theorem~3 in paper~\cite{bk}. In fact, just one pair of minors must be compared.

\section{New solutions}\label{sec:g}

\subsection{Solution with term of degree~4}\label{subs:nG}

We add one more term --- found by method of free search and trial --- to \eqref{G}:
\begin{equation}\label{g}
\mathbf g_{1234} \stackrel{\rm def}{=} \mathbf f_{1234} + \epsilon_{1234}\, \lambda\, c_{1234}\, a_{123}a_{124}a_{134}a_{234},
\end{equation}
and similarly, with substitution $k\mapsto i_k$, for any tetrahedron~$i_1i_2i_3i_4$. In~\eqref{g},
\[
c_{1234} = \prod_{1\le i<j\le 4} \zeta_{ij} ,
\]
$\lambda$ is an overall parameter, and $\epsilon_{1234}$ is simply the unity, but in general $\epsilon_{i_1i_2i_3i_4}=\pm 1$: if the tetrahedron orientation determined by the order of vertices~$i_1,i_2,i_3,i_4$ is \emph{consistent} with the orientation~$1,2,3,4$ for tetrahedron~$1234$, then it is~$1$, otherwise~$-1$. Even more directly: $\epsilon_{1235}=-1$, $\epsilon_{1245}=-1$, $\epsilon_{1345}=1$, and~$\epsilon_{2345}=-1$.

\begin{theorem}\label{th:g}
The $\mathbf g$'s defined by~\eqref{g} satisfy the same pentagon equation as the~$\mathbf f$'s, i.e.,
\begin{equation}\label{5g}
\int \mathbf g_{1234} \mathbf g_{1235} \, \mathrm da_{123} = - \frac{1}{\zeta_{45}} \iiint \mathbf g_{1245} \mathbf g_{2345} \mathbf g_{1345} \, \mathrm da_{345}\, \mathrm da_{245}\, \mathrm da_{145} .
\end{equation}
\end{theorem}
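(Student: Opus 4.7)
The plan is to expand both sides of~\eqref{5g} in powers of~$\lambda$ and check the identity order by order. Writing $\mathbf g_{ijkl} = \mathbf f_{ijkl} + \lambda X_{ijkl}$ with
\[
X_{ijkl} = \epsilon_{ijkl}\, c_{ijkl}\, a_{ijk}a_{ijl}a_{ikl}a_{jkl},
\]
note that $X_{ijkl}$ is, up to a scalar, the unique top-degree Grassmann monomial in the four face variables of tetrahedron~$ijkl$.

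The key observation is that if tetrahedra $P$ and $Q$ share a face, then $X_P X_Q$ contains the square of the corresponding face variable and so vanishes. On the l.h.s.\ of~\eqref{5g} the two tetrahedra share face~$123$; on the r.h.s.\ each pair among $\{1245,1345,2345\}$ shares one of the faces $145$, $245$, $345$. Consequently the coefficient of $\lambda^k$ vanishes for $k\ge 2$ on both sides, and both sides are polynomials of degree exactly one in~$\lambda$.

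The $\lambda^0$ coefficients are exactly the two sides of the old pentagon equation~\eqref{5}, which already holds. It therefore suffices to verify the linear-in-$\lambda$ identity
\begin{multline*}
\int\bigl(X_{1234}\mathbf f_{1235}+\mathbf f_{1234}X_{1235}\bigr)\,\mathrm da_{123} \\
= -\frac{1}{\zeta_{45}}\iiint\bigl(X_{1245}\mathbf f_{2345}\mathbf f_{1345}+\mathbf f_{1245}X_{2345}\mathbf f_{1345}+\mathbf f_{1245}\mathbf f_{2345}X_{1345}\bigr)\,\mathrm da_{345}\,\mathrm da_{245}\,\mathrm da_{145}.
\end{multline*}

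To handle this I would again exploit the fact that each product $X_P\mathbf f_Q$ retains only those monomials of~$\mathbf f_Q$ whose face variables are disjoint from those of~$X_P$. This strongly truncates the expansion: on the l.h.s.\ exactly three of the six monomials of each~$\mathbf f$ contribute, and on the r.h.s.\ only a handful of the resulting triple products survive (also taking into account the shared face~$345$ between $\mathbf f_{2345}$ and~$\mathbf f_{1345}$, etc.). After these reductions both sides become explicit linear combinations of Grassmann monomials in the ``outer'' face variables $a_{124},a_{134},a_{234},a_{125},a_{135},a_{235}$, and the identity is checked monomial by monomial; equivalently, the surviving contributions can be repackaged as minors of augmented versions of matrices~\eqref{lG} and~\eqref{rG}, in the spirit of Subsection~\ref{subs:e}.

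The main obstacle will be careful bookkeeping --- signs from the factors $\epsilon_{ijkl}$, from reordering Grassmann variables before the Berezin integrations, and from the overall $-1/\zeta_{45}$ --- combined with one polynomial identity in the~$\zeta_{ij}$ emerging at the end; given that $c_{ijkl}$ is a Vandermonde-like product, I expect a Vandermonde-type cancellation to drive the final verification.
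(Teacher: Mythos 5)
Your proposal is correct and follows essentially the same route as the paper: both arguments reduce the problem, via the known identity~\eqref{5} for the $\mathbf f$'s and the vanishing of the $X_PX_Q$ cross-terms (i.e.\ $a^2=0$), to the terms linear in~$\lambda$ --- equivalently, the degree-5 monomials --- which are then to be checked by direct computation. The only ingredient you miss is the paper's further shortcut of invoking the symmetry of~\eqref{5g} under permutations of the vertices $1,2,3$ and of $4,5$ to cut the verification down to the coefficient of a single degree-5 monomial (namely $a_{124}a_{125}a_{134}a_{135}a_{235}$); like you, the paper leaves that final bookkeeping to an explicit hand/computer calculation.
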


\begin{proof}[Sketch of the proof]
The only known to us proof of Theorem~\ref{th:g} consists in direct calculations. These are simplified by
\begin{enumerate}
\item\label{i:f} the fact that the~$\mathbf f$'s already satisfy~\eqref{5},
\item\label{i:a} the fact that $a^2=0$ for a Grassmann generator~$a$, and
\item\label{i:s} the symmetries of~\eqref{5g}: it transforms into itself under any permutation of vertices~$1,2,3$, as well as~$4,5$.
\end{enumerate}
It follows from~\ref{i:f} that all monomials of degree~3 in the l.h.s.\ and r.h.s.\ of~\eqref{5g} are already the same. Due to~\ref{i:a}, only monomials of degree~5 remain to be checked, and \ref{i:s}~makes it enough to check the coefficients at just one monomial of degree~5 in both sides of~\eqref{5g}, for instance, the factors at $a_{124}a_{125}a_{134}a_{135}a_{235}$. This has been actually done first using paper and pencil and then double-checked using GAP computer algebra system~\cite{GAP}.
\end{proof}

\subsection{Solution with term of degree~0}

There is also a somewhat similar but simpler solution of pentagon equation:
\begin{equation}\label{nR}
\mathbf h_{1234} \stackrel{\rm def}{=} \mathbf f_{1234} + \epsilon_{1234}\, \mu,
\end{equation}
and similarly for other tetrahedra. Here $\mu$ is an overall parameter, and $\epsilon_{i_1i_2i_3i_4}$ has the same meaning as in Subsection~\ref{subs:nG}.

\begin{theorem}\label{th:nR}
The $\mathbf h$'s defined by~\eqref{nR} satisfy the same pentagon equation as the~$\mathbf f$'s and~$\mathbf g$'s, namely,
\begin{equation}\label{5h}
\int \mathbf h_{1234} \mathbf h_{1235} \, \mathrm da_{123} = - \frac{1}{\zeta_{45}} \iiint \mathbf h_{1245} \mathbf h_{2345} \mathbf h_{1345} \, \mathrm da_{345}\, \mathrm da_{245}\, \mathrm da_{145} .
\end{equation}
\end{theorem}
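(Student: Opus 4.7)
The plan is to substitute $\mathbf h_X = \mathbf f_X + \epsilon_X \mu$ into both sides of~\eqref{5h}, expand in powers of~$\mu$, and match coefficients, using the same three ingredients as the proof of Theorem~\ref{th:g}: the relation~\eqref{5} for the~$\mathbf f$'s, the Berezin/Grassmann vanishing properties, and the symmetries of~\eqref{5h} under permutations of $\{1,2,3\}$ and of $\{4,5\}$. The $\mu^0$-coefficients coincide by~\eqref{5}. The $\mu^2$-coefficient on the left-hand side is $\epsilon_{1234}\epsilon_{1235}\mu^2\int \mathrm da_{123}=0$, and likewise the $\mu^3$-coefficient on the right-hand side vanishes. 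The $\mu^2$-coefficient on the right-hand side is a signed sum of triple integrals $\iiint \mathbf f_X\,\mathrm da_{345}\,\mathrm da_{245}\,\mathrm da_{145}$ for $X\in\{1245,2345,1345\}$; each~$\mathbf f_X$ has Grassmann degree~$2$ and always lacks at least one of $a_{145}, a_{245}, a_{345}$ (namely, the generator of the face opposite to the vertex missing from~$X$), so each such integral is zero.

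It remains to compare the $\mu^1$-coefficients. Each is $\mu$ times a linear form in the six generators $a_{ijk}$ with $\{i,j\}\subset\{1,2,3\}$ and $k\in\{4,5\}$, and $S_3\times S_2$ acts transitively on these six generators, so it suffices to check one of them, say $a_{124}$. On the left, the relevant contribution to $a_{124}$ comes from the term $\zeta_{12}a_{123}a_{124}$ of~$\mathbf f_{1234}$ (none appears from $\mathbf f_{1235}$), and produces $\mu\zeta_{12}a_{124}$ after taking account of the $\epsilon$-signs. On the right, one extracts the coefficient of $a_{124}a_{145}a_{245}a_{345}$ in the sum $\mathbf f_{1245}\mathbf f_{2345} - \mathbf f_{1245}\mathbf f_{1345} - \mathbf f_{2345}\mathbf f_{1345}$ (weighted according to the $\epsilon$'s): the third product contains no~$a_{124}$; the first contributes $-\zeta_{14}\zeta_{45}$, and the second $-\zeta_{24}\zeta_{45}$ after one Grassmann reordering sign, for a total of $(\zeta_{24}-\zeta_{14})\zeta_{45}=-\zeta_{12}\zeta_{45}$. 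Triple integration reduces $a_{124}a_{145}a_{245}a_{345}$ to~$a_{124}$, and multiplication by the prefactor $-1/\zeta_{45}$ yields $\mu\zeta_{12}a_{124}$, matching the left-hand side.

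The main obstacle is the modest bookkeeping needed to verify the claimed $S_3\times S_2$ symmetry of~\eqref{5h}: the signs~$\epsilon_X$ (which flip when the order of vertices reverses orientation), the prefactor~$\zeta_{45}^{-1}$ (which changes sign under $4\leftrightarrow 5$), and the signs from reordering Berezin differentials must conspire to give overall invariance. Once this is established, the single coefficient check above completes the proof.
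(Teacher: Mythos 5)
Your proposal is correct and follows essentially the same route as the paper: since the $\mathbf f$'s already satisfy~\eqref{5}, the Berezin rules kill the $\mu^2$ and $\mu^3$ contributions, and the $S_3\times S_2$ symmetry reduces the remaining $\mu^1$ comparison to a single degree-$1$ monomial, which you verify correctly (the coefficient $\mu\zeta_{12}$ of $a_{124}$ on both sides checks out). The paper's own proof is only a one-line sketch saying exactly this --- ``check a monomial of degree~1 instead of~5'' --- so your write-up supplies the details it omits.
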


\begin{proof}[Sketch of the proof]
Again, the only known to us proof of Theorem~\ref{th:nR} consists in direct calculations. The difference with Theorem~\ref{th:g} is that here we must check a monomial of degree~1, instead of~5.
\end{proof}

\section{Representing new solutions as Gaussian integrals}\label{sec:f}

\subsection{Gaussian integral for~$\mathbf g$}

It is not difficult to see directly that our solution~\eqref{g} can be written in the following Gaussian integral form: replace~$\Phi_{1234}$ given by~\eqref{Phi} with
\[
\Gamma_{1234} = \Phi_{1234} + \epsilon_{1234}\lambda_{1234}\zeta_{13}\zeta_{14}\zeta_{23}\zeta_{24}\zeta_{34}a_{134}a_{234},
\]
then the analogue of~\eqref{ef} holds:
\[
\mathbf g_{1234} = \iint \exp \Gamma \; \mathrm db_{1234}^{(1)}\, \mathrm db_{1234}^{(2)}.
\]

\subsection{Gaussian integral for~$\mathbf h$}

Neither is difficult to bring~\eqref{nR} to the Gaussian form: replace~$\Phi_{1234}$ given by~\eqref{Phi} with
\[
\Psi_{1234} = \Phi_{1234} + b_{1234}^{(2)}b_{1234}^{(1)},
\]
then
\[
\mathbf h_{1234} = \iint \exp \Psi \; \mathrm db_{1234}^{(1)}\, \mathrm db_{1234}^{(2)}.
\]

\subsection{$\Gamma$ and~$\Psi$ not as simple as~$\Phi$}\label{subs:p}

One big new feature of forms $\Gamma$ and~$\Psi$, compared to~$\Phi$, is that neither $\Gamma$ nor~$\Psi$ is any longer a form linear, separately, in~$a$'s belonging to 2-faces, on one hand, and $b$'s belonging to tetrahedra, on the other hand. This makes it problematic to associate with $\Gamma$ or~$\Psi$, at least in a direct way, a matrix whose copies could be used, first, as building blocks for a larger matrix (like, for a simple instance, \eqref{lG} or~\eqref{rG}), and then include this larger matrix in a sequence of matrices forming a chain complex. Recall that in~\cite{bk} and our other papers, the Reidemeister torsion of a complex built in such way was used to construct manifold invariants.

\section{Discussion}\label{sec:d}

Here are some concluding remarks:

\begin{itemize}
\item The most intriguing thing about our solutions~\eqref{g} and~\eqref{nR} is that their ``mother solution''~\eqref{G} has a four-dimensional generalization~\cite{4} and, in fact, generalizes to any manifold dimension~\cite{ks}. So, it may make sense to search for higher dimensional generalizations of \eqref{g} and~\eqref{nR} as well. This search may be started with \emph{infinitesimal} perturbations of the known solutions: if they exist, this will be already of great interest.
\item As we already mentioned, the solution~\eqref{G} is known~\cite{bk} to be closely related to the group~$\mathrm{Aff}(\mathbb F)$. At this moment, it is unclear whether this relation is conserved for our new solutions, or maybe $\mathrm{Aff}(\mathbb F)$ should be replaces by another algebraic object.
\item Also, Subsection~\ref{subs:p} suggests that some generalization of Reidemeister torsion may be needed.
\item Of course, the behavior of our solutions with respect to Pachner moves $1\to 4$ (a tetrahedron is divided in four, so that a new vertex appears within it) deserves close attention. After obtaining necessary formulas, we can look at what kind of manifold invariants this brings about.
\item We could not (as yet?) unite \eqref{g} and~\eqref{nR} somehow into one ``composite'' solution.
\end{itemize}

\end{document}